\def\DIS{\displaystyle}
\def\hugesymbol#1{\mbox{\strut\rlap{\smash{\Huge$#1$}}\quad}}
\def\C{{\mathbb C}}
\def\Z{{\mathbb Z}}
\def\Q{{\mathbb Q}}
\def\F{{\mathbb F}}
\def\R{{\mathbb R}}
\def\P{{\mathbb P}}
\def\A{{\mathbb A}}
\def\II{${}_{\mbox{\scriptsize{II} }}$}
\def\sp{\mbox{\scriptsize p}}
\title{The space of initial conditions and the property of an almost good reduction in discrete Painlev\'{e} II equations over finite fields}
\author{\footnotesize Masataka Kanki}
\address{Graduate School of Mathematical Sciences, University of Tokyo,\\
3-8-1 Komaba, Tokyo 153-8914, Japan\\
\email{kanki@ms.u-tokyo.ac.jp}}
\author{Jun Mada}
\address{College of Industrial Technology, \\
Nihon University, 2-11-1 Shin-ei, Narashino, Chiba 275-8576, Japan\\\email{mada.jun@nihon-u.ac.jp}}
\author{\footnotesize Tetsuji Tokihiro}
\address{Graduate School of Mathematical Sciences, University of Tokyo,\\
3-8-1 Komaba, Tokyo 153-8914, Japan\\
\email{toki@ms.u-tokyo.ac.jp}}
\begin{document}

\maketitle
\thispagestyle{empty}

\vphantom{\vbox{%
\begin{history}
\received{(Day Month Year)}
\revised{(Day Month Year)}
\accepted{(Day Month Year)}
\end{history}
}}

\begin{abstract}
We investigate the discrete Painlev\'{e} equations (dP\II and $q$P\II) over finite fields. We first show that they are well defined by extending the domain according to the theory of the space of initial conditions.
Then we treat them over local fields and observe that they have a property that is similar to the good reduction of dynamical systems over finite fields.
We can use this property, which can be interpreted as an arithmetic analogue of singularity confinement, to avoid the indeterminacy of the equations over finite fields and to obtain special solutions from those defined originally over fields of characteristic zero.
\end{abstract}

\keywords{discrete Painlev\'{e} equation; finite field; good reduction; space of initial condition.}
\ccode{2000 Mathematics Subject Classification: 37K10, 34M55, 37P25}

\section{Introduction}
\label{sec1}

In this article, we study the discrete Painlev\'{e} equations over finite fields.
The discrete Painlev\'{e} equations are non-autonomous, integrable mappings which tend to some continuous Painlev\'{e} equations for appropriate choices of the continuous limit \cite{RGH}.
When we treate a discrete Painlev\'{e} equation over a finite field, we encounter the problem that its time evolution is not always well defined. 
This problem cannot be solved even if we extend the domain from $\F_q$ to the projective space $\P\F_q$, because $\P\F_q$ is no longer a field;
we cannot determine the values such as $\frac{0}{0},\ 0 \cdot \infty,\ \infty+\infty$ and so on.

There may be two strategies to define the time evolution over a finite field without inconsistencies.
One is to reduce the domain so that the time evolution will not pass the indeterminate states, and the other is to extend the domain so that it can include all the orbits. 
We take the latter strategy and adopt two approaches.

The first approach is the application of the theory of space of initial conditions developed by Okamoto \cite{Okamoto} and Sakai \cite{Sakai}.
We show that the dynamics of the equations over finite fields can be well defined in the space of initial conditions. 

The second approach we adopt is closely related to the theory of arithmetic dynamical systems, which concerns the dynamics over arithmetic sets such as $\mathbb{Z}$ or $\mathbb{Q}$ or a number field that is of number theoretic interest \cite{Silverman}.
In arithmetic dynamics, the change of dynamical properties of polynomial or rational mappings give significant information when reducing them modulo prime numbers. 
The mapping is said to have good reduction if, roughly speaking, the reduction commutes with the mapping itself \cite{Silverman}.
Linear fractional transformations in $\mbox{PGL}_2$ are typical examples of mappings with a good reduction.
Recently bi-rational mappings over finite fields have been investigated in terms of integrability
\cite{Roberts}.
Since all the orbits are cyclic as far as the mapping is closed over the (projective) space of a finite field and the integrable mapping has a conserved quantity, one can 
estimate the distribution of orbit length by using Hesse-Weil bounds or numerical calculations.
The QRT mappings \cite{QRT} over finite fields have been studied in detail by choosing the parameter values so that indeterminate points are avoided \cite{Roberts2}.
They have a good reduction over finite fields.

We prove that, although most of the integrable mappings with singularities do not have a good reduction modulo a prime in general,
they have an \textit{almost good reduction}, which is a generalization of good reduction.
We apply the method to the $q$-discrete Painlev\'{e} II equation ($q$P\II ).
The time evolution of the discrete  Painlev\'{e} equations can be well defined generically, even when not defined over the projective space of the field.
In particular, the reduction from a local field $\Q_p$ to a finite field $\F_p\cup\{\infty\}$ is shown to be well defined and is used to obtain some special solutions directly from those over fields of characteristic zero such as $\Q$ or $\R$.
%
%
\section{The dP\II equation and its space of initial conditions}
%
%
A discrete Painlev\'{e} equation is a non-autonomous and nonlinear second order ordinary difference equation with several parameters.
When it is defined over a finite field, the dependent variable takes only a finite number of values and its time evolution will attain an indeterminate state in many cases for generic values of the parameters and initial conditions.
For example, the dP\II equation is defined as 
\begin{equation}
u_{n+1}+u_{n-1}=\frac{z_n u_n+a}{1-u_n^2}\quad (n \in \mathbb{Z}),
\label{dP2equation}
\end{equation}
where $z_n=\delta n + z_0$ and $a, \delta, z_0$ are constant parameters \cite{NP}.
Let $q=p^k$ for a prime $p$ and a positive integer $k \in \Z_+$.
When \eqref{dP2equation} is defined over a finite field $\F_{q}$,
the dependent variable $u_n$ will eventually take values $\pm 1$ for generic parameters and initial values $(u_0,u_1) \in \F_{q}^2$, 
and we cannot proceed to evolve it.
If we extend the domain from $\F_{q}^2$ to $(\P\F_q)^2=(\F_q\cup\{\infty\})^2$, $\P\F_q$ is not a field and we cannot define arithmetic operation in \eqref{dP2equation}. 
To determine its time evolution consistently, we have two choices:
One is to restrict the parameters and the initial values to a smaller domain so that the singularities do not appear.
The other is to extend the domain on which the equation is defined.
In this article, we will adopt the latter approach.
It is convenient to rewrite \eqref{dP2equation} as:
\begin{equation}
\left\{
\begin{array}{cl}
x_{n+1}&=\dfrac{\alpha_n}{1-x_n}+\dfrac{\beta_n}{1+x_n}-y_{n},\\
y_{n+1}&=x_n,
\end{array}
\right.
\label{dP2}
\end{equation}
where $\alpha_n:=\frac{1}{2}(z_n+a),\ \beta_n:=\frac{1}{2}(-z_n+a)$.
Then we can regard \eqref{dP2} as a mapping defined on the domain $\F_q \times \F_q$.
To resolve the indeterminacy at $x_n = \pm 1$, we apply the theory of the state of initial conditions developed by Sakai \cite{Sakai}.
First we extend the domain to $\P\F_q \times \P\F_q$, and then blow it up at four points $(x,y)=(\pm 1, \infty), (\infty, \pm 1)$ 
to obtain the space of initial conditions:
\begin{equation}
\tilde{\Omega}^{(n)}:=\mathcal{A}_{(1,\infty)}^{(n)}\cup \mathcal{A}_{(-1,\infty)}^{(n)}\cup \mathcal{A}_{(\infty,1)}^{(n)}\cup \mathcal{A}_{(\infty,-1)}^{(n)},
\label{omega}
\end{equation}
where $\mathcal{A}_{(1,\infty)}^{(n)}$ is the space obtained from the two dimensional affine space $\A^2$ by blowing up twice as
\begin{align*}
\mathcal{A}_{(1,\infty)}^{(n)}&:=\left\{ \left((x-1,y^{-1}),[\xi_1:\eta_1],[u_1:v_1]  \right)\ \Big|\ \right. \\
&\qquad \eta_1 (x-1)=\xi_1 y^{-1},
(\xi_1+\alpha_n \eta_1)v_1=\eta_1(1-x)u_1 \ \Big\} \; \subset \A^2 \times \P \times \P.  
\end{align*}
Similarly, 
\begin{align*}
\mathcal{A}_{(-1,\infty)}^{(n)}&:=\left\{ \left((x+1,y^{-1}),[\xi_2:\eta_2],[u_2:v_2]  \right)\ \Big|\ 
\right.\\
&\qquad \qquad \eta_2 (x+1)=\xi_2 y^{-1},(-\xi_2+\beta_n \eta_2)v_2=\eta_2(1+x)u_2 \ \Big\},\\
\mathcal{A}_{(\infty,1)}^{(n)}&:=\left\{ \left((x^{-1},y-1),[\xi_3:\eta_3],[u_3:v_3]  \right)\ \Big|\ 
\right. \\
& \qquad \qquad \xi_3 (y-1)=\eta_3 x^{-1}, (\eta_3+\alpha_n \xi_3)v_3=\xi_3(1-y)u_3 \ \Big\},\\
\mathcal{A}_{(\infty,-1)}^{(n)}&:=\left\{ \left((x^{-1},y+1),[\xi_4:\eta_4],[u_4:v_4]  \right)\ \Big|\ 
\right. \\
& \qquad \qquad \xi_4 (y+1)=\eta_4 x^{-1}, (-\eta_4+\beta_n \xi_4)v_3=\xi_4(1+y)u_4 \ \Big\}.
\end{align*}  
The bi-rational map \eqref{dP2} is extended to the bijection $\tilde{\phi}_n: \ \tilde{\Omega}^{(n)} \rightarrow \tilde{\Omega}^{(n+1)}$ 
which decomposes as $\tilde{\phi}_n:=\iota_n \circ \tilde{\omega}_n$. 
Here $\iota_n$ is a natural isomorphism which gives $\tilde{\Omega}^{(n)} \cong  \tilde{\Omega}^{(n+1)}$, that is,
on $\mathcal{A}_{(1,\infty)}^{(n)}$ for instance, $\iota_n$ is expressed as 
\begin{align*}
&\left((x-1,y^{-1}),[\xi :\eta ],[u :v ]  \right) \in  \mathcal{A}_{(1,\infty)}^{(n)} \\
&\rightarrow \quad
\left((x-1,y^{-1}),[\xi -\delta/2\cdot\eta:\eta ],[u :v ]  \right) \in  \mathcal{A}_{(1,\infty)}^{(n+1)}.
\end{align*}

The automorphism $\tilde{\omega}_n$ on $\tilde{\Omega}^{(n)}$ is induced from \eqref{dP2} and gives the mapping
\[
\mathcal{A}_{(1, \infty)}^{(n)} \rightarrow \mathcal{A}_{(\infty,1)}^{(n)}, \;
\mathcal{A}_{(\infty,1)}^{(n)} \rightarrow \mathcal{A}_{(-1,\infty)}^{(n)}, \;
\mathcal{A}_{(-1, \infty)}^{(n)} \rightarrow \mathcal{A}_{(\infty,-1)}^{(n)}, \;
\mathcal{A}_{(\infty,-1)}^{(n)} \rightarrow \mathcal{A}_{(1,\infty)}^{(n)}.
\]
Under the map $\mathcal{A}_{(1, \infty)}^{(n)} \rightarrow \mathcal{A}_{(\infty,1)}^{(n)}$,
\begin{align*}
x=1 \ \rightarrow \ E_2^{(\infty,1)} &\qquad u_3=\left(y-\frac{\beta_n}{2}\right)v_3, \\
E_1^{(1,\infty)} \ \rightarrow \ E_1^{(\infty,1)} &\qquad [\xi_1:-\eta_1]=[\alpha_n \xi_3+\eta_3:\xi_3], \\
E_2^{(1,\infty)} \ \rightarrow \ y'=1 &\qquad x'=\frac{u_1}{v_1}+\frac{\beta_n}{2},
\end{align*}
where $(x,y) \in \mathcal{A}_{(1, \infty)}^{(n)}$, $(x',y')\in \mathcal{A}_{(\infty,1)}^{(n)}$, $E_1^{\sp}$ and $E_2^{\sp}$ are the exceptional curves in $\mathcal{A}_{\sp}^{(n)}$ obtained by the first blowing up and the second blowing up respectively at the point p $\in \{(\pm 1, \infty),(\infty,\pm 1)  \}$. 
Similarly under the map $\mathcal{A}_{(\infty,1)}^{(n)} \rightarrow \mathcal{A}_{(-1,\infty)}^{(n)}$,
\begin{align*}
E_1^{(\infty,1)} \ \rightarrow \ E_1^{(-1,\infty)} &\qquad [\xi_3:\eta_3]=[\eta_2:(\beta_n-\alpha_n) \eta_2-\xi_2], \\
E_2^{(\infty,1)} \ \rightarrow \ E_2^{(-1,\infty)} &\qquad [u_3:v_3]=[-\beta_n u_2: \alpha_n v_2].
\end{align*}
The mapping on the other points are defined in a similar manner.
Note that $\tilde{\omega}_n$ is well-defined in the case $\alpha_n=0$ or $\beta_n=0$.
In fact, for $\alpha_n=0$, $E_2^{(1,\infty)}$ and $E_2^{(\infty,1)}$ can be identified with the lines $x=1$ and $y=1$ respectively. 
Therefore we have found that, through the construction of the space of initial conditions, the dP\II equation can be well-defined over finite fields.
However there are some unnecessary elements in the space of initial conditions when we consider a finite field, because we are working on a discrete topology and do not need continuity of the map. 
Let $\tilde{\Omega}^{(n)}$ be the space of initial conditions and $|\tilde{\Omega}^{(n)}|$ be the number of elements of it.
For the dP\II equation, we obtain $|\tilde{\Omega}^{(n)}|=(q+1)^2-4+4(q+1)-4+4(q+1)=q^2+10q+1$, since $\P\F_q$ contains $q+1$ elements.
However an exceptional curve $E_1^{\sp}$ is transferred to another exceptional curve $E_1^{\sp'}$, and $[1:0] \in E_2^{\sp}$ to 
$[1:0] \in E_2^{\sp'}$ or to a point in $E_1^{\sp'}$. Hence we can reduce the space of initial conditions $\tilde{\Omega}^{(n)}$ to the minimal space of initial conditions $\Omega^{(n)}$ which is the minimal subset of $\tilde{\Omega}^{(n)}$ including $\P\F_q\times \P\F_q$, closed under the time evolution.
By subtracting unnecessary elements we find $|\Omega^{(n)}|=(q+1)^2-4+4(q+1)-4=q^2+6q-3$.
In summary, we obtain the following proposition:
\begin{theorem}
The domain of the dP\II equation over $\F_q$ can be extended to the minimal domain $\Omega^{(n)}$ on which the time evolution at time step $n$ is well defined. Moreover $|\Omega^{(n)}|=q^2+6q-3$. 
\end{theorem}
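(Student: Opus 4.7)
The plan is to first verify that $\tilde{\phi}_n=\iota_n\circ\tilde{\omega}_n$ is an everywhere-defined bijection $\tilde{\Omega}^{(n)}\to\tilde{\Omega}^{(n+1)}$, then locate the minimal $\tilde{\phi}_n$-closed subset $\Omega^{(n)}$ containing the non-indeterminate part of $\P\F_q\times\P\F_q$, and finally count.

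For the first step, the explicit piecewise formulas in the excerpt already describe $\tilde{\omega}_n$ on every chart $\mathcal{A}_{\sp}^{(n)}$, on the first and second exceptional curves $E_1^{\sp}$, $E_2^{\sp}$, and on the proper transforms of the four lines $x=\pm1,\ y=\pm1$. I would verify that these local descriptions glue consistently on overlaps and have no residual indeterminacy, in particular in the degenerate cases $\alpha_n=0$ or $\beta_n=0$ already flagged in the excerpt. The parameter shift $\iota_n$ is manifestly everywhere defined, so $\tilde{\phi}_n$ is a bijection of $\tilde{\Omega}^{(n)}$ onto $\tilde{\Omega}^{(n+1)}$.

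To descend to $\Omega^{(n)}$ I would exploit the fact that $\tilde{\omega}_n$ sends each $E_1^{\sp}$ bijectively onto another $E_1^{\sp'}$, whereas $E_2^{\sp}\setminus\{[1:0]\}$ is sent bijectively onto the $q$ affine points of a line of the form $x'=\pm1$ or $y'=\pm1$ already present in $\P\F_q\times\P\F_q$ (e.g.\ $E_2^{(1,\infty)}\to\{y'=1\}$ via $x'=u_1/v_1+\beta_n/2$). The remaining point $[1:0]\in E_2^{\sp}$, geometrically the intersection $E_1^{\sp}\cap E_2^{\sp}$, is mapped (by the observation recorded in the excerpt) either into another $[1:0]\in E_2^{\sp'}$ or into $E_1^{\sp'}$. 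Using these bijections as identifications, I would absorb every $E_2^{\sp}$ together with its attachment point on the corresponding $E_1^{\sp}$ into points that are already retained, leaving $\Omega^{(n)}$ closed under $\tilde{\phi}_n$ without reintroduction of any omitted point.

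The count is then direct: $|\Omega^{(n)}|=\bigl((q+1)^2-4\bigr)+4q=q^2+6q-3$, where $(q+1)^2-4$ counts the non-indeterminate points of $\P\F_q\times\P\F_q$ and the $4q$ counts the four first exceptional curves $E_1^{\sp}$ each minus the absorbed intersection point with $E_2^{\sp}$. Equivalently, starting from $|\tilde{\Omega}^{(n)}|=q^2+10q+1$ and subtracting the four absorbed $E_2^{\sp}$'s of $q+1$ points each recovers the same answer.

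The main obstacle is the careful verification that the reduced $\Omega^{(n)}$ is indeed closed under $\tilde{\phi}_n$. One must trace, for each of the four points $[1:0]\in E_2^{\sp}$, that its image lands inside the retained part of $\Omega^{(n+1)}$, either on a retained point of some $E_1^{\sp'}$ or on another special point absorbed by the same identification. This requires going once around the $4$-cycle of charts $\mathcal{A}_{(1,\infty)}\to\mathcal{A}_{(\infty,1)}\to\mathcal{A}_{(-1,\infty)}\to\mathcal{A}_{(\infty,-1)}\to\mathcal{A}_{(1,\infty)}$ and checking compatibility at each step; once that check is completed the statement follows.
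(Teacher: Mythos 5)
Your construction of $\tilde{\Omega}^{(n)}$, the verification that $\tilde{\phi}_n$ is a bijection, and the count $|\tilde{\Omega}^{(n)}|=q^2+10q+1$ are all fine. The gap is in the descent to $\Omega^{(n)}$: you have kept the wrong half of the exceptional divisor. You retain the four curves $E_1^{\sp}$ (minus a point each) and discard the four $E_2^{\sp}$, ``absorbing'' them into the base via identifications; but the resulting set is neither closed under the evolution nor minimal. It is not closed because the base itself maps into the $E_2$'s: the affine part of the line $x=1$ is sent onto $E_2^{(\infty,1)}\setminus\{[1:0]\}$ via $u_3=(y-\beta_n/2)v_3$, which is then sent to $E_2^{(-1,\infty)}$ and only afterwards returns to the line $y=-1$. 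These intermediate states carry the $y$-dependence through the singular pattern $x=1\to\infty\to-1$ and cannot be deleted or telescoped onto retained points without replacing the one-step map by a multi-step composition, i.e.\ without changing the dynamics. (Your assertion that every $E_2^{\sp}\setminus\{[1:0]\}$ lands on a line of $\P\F_q\times\P\F_q$ is also false for $\sp=(\infty,\pm1)$: the displayed formula $[u_3:v_3]=[-\beta_n u_2:\alpha_n v_2]$ shows those two are sent to other second exceptional curves.) It is not minimal because nothing in the base ever reaches an $E_1^{\sp}$: the four $E_1$'s together with the four corner points $[1:0]\in E_2^{\sp}$ form a union of orbits of $\tilde{\omega}_n$ disjoint from the orbit closure of the base --- this is precisely the content of the paper's remark that $E_1^{\sp}$ is transferred to $E_1^{\sp'}$ and $[1:0]\in E_2^{\sp}$ to $[1:0]\in E_2^{\sp'}$ or to a point of $E_1^{\sp'}$.

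The correct minimal invariant set is the complement of that invariant union: it consists of $\P\F_q\times\P\F_q$ with the four blown-up points removed, together with the four punctured curves $E_2^{\sp}\setminus\{[1:0]\}$, and the justification is that the discarded part is itself a union of orbits, so its complement is closed, while every point of the retained $E_2$-part is reached from the base within a few steps, so the complement is minimal. Since $|E_1^{\sp}\setminus\{\mathrm{pt}\}|=|E_2^{\sp}\setminus\{[1:0]\}|=q$, your total $(q+1)^2-4+4q=q^2+6q-3$ agrees numerically with the paper's, but only by a coincidence of cardinalities; the set it counts, and the closure argument offered for it, are not the right ones.
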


\begin{figure}
\centering
\includegraphics[width=11cm,bb=-152 152 510 662]{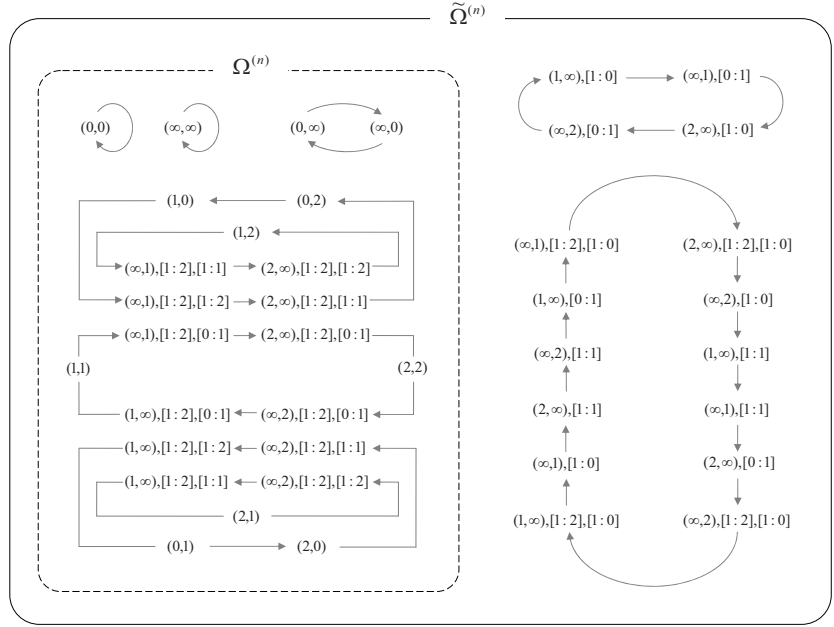}
\caption{The orbit decomposition of the space of initial conditions $\tilde{\Omega}^{(n)}$ and the reduced one $\Omega^{(n)}$ for $q=3$.}
\label{figure1}
\end{figure}
In figure \ref{figure1}, we show a schematic diagram of the map $\tilde{\omega}_n$ on $\tilde{\Omega}^{(n)}$, and its restriction map $\omega_n:=\tilde{\omega}_n|_{\Omega^{(n)}}$ on $\Omega^{(n)}$
with $q=3$, $\alpha_0=1$ and $\beta_0=2$.
We can also say that the figure \ref{figure1} is a diagram for the autonomous version of the equation \eqref{dP2} when $\delta=0$.
In the case of $q=3$, we have $|\tilde{\Omega}^{(n)}|=40$ and $|\Omega^{(n)}|=24$. 

The above approach is equally valid for the other discrete Painlev\'{e} equations and we can define them over finite fields by constructing isomorphisms on the spaces of initial conditions.
Thus we conclude that a discrete Painlev\'{e} equation can be well defined over a finite field by redefining the initial domain properly. 
However, for a general nonlinear equation, explicit construction of the space of initial conditions over a finite field is not so straightforward \cite{Takenawa} and it will not help us to obtain the explicit solutions.
In the next section, we show another extention of the space of initial conditions: we extend it to $\Z_p\times \Z_p$.  
%
%
\section{The $q$P\II equation over a local field and its reduction modulo a prime}
%
%
Let $p$ be a prime number and for each $x \in \Q$ ($x \ne 0$) write $x=p^{v_p(x)} \dfrac{u}{v}$ where $v_p(x), u, v \in \Z$ and $u$ and $v$ are coprime integers neither of which is divisible by $p$.
The $p$-adic norm $|x|_p$ is defined as $|x|_p=p^{-v_p(x)}$. ($|0|_p=0$.)
The local field $\Q_p$ is a completion of $\Q$ with respect to the $p$-adic norm. 
It is called the field of $p$-adic numbers and its subring $\Z_p:=\{x\in \Q_p | \ |x|_p \le 1\}$ is called the ring of $p$-adic integers \cite{Murty}. 
The $p$-adic norm satisfies a non-archimedean (ultrametric) triangle inequality 
$|x+y|_p \le \max[|x|_p,|y|_p ]$.
Let $\mathfrak{p}=p\Z_p=\left\{x \in \Z_p |\ v_p(x) \ge 1 \right\}$ be the maximal ideal of $\Z_p$.
We define the reduction of $x$ modulo $\mathfrak{p}$ as $\tilde{x}$: $\Z_p \ni x \mapsto \tilde{x} \in \Z_p/\mathfrak{p} \cong \F_p$.
Note that the reduction is a ring homomorphism.
The reduction is generalized to $\Q_p^{\times}$:
\[
\Q_p^{\times}\ni x=p^k u\ (u\in\Z_p^{\times})\mapsto
\left\{
\begin{array}{cl}
0 & (k>0)\\
\infty & (k<0)\\
\tilde{u} & (k=0)
\end{array}
\right. \in \P\F_p,
\]
which is no longer homomorphic.
For a rational map of the plane
\[
\phi(x,y)=(\phi_1(x,y), \phi_2(x,y))\in(\mathbb{Z}_p(x,y))^2:\ \mathcal{D} \subseteq \Z_p^2 \to \Z_p^2,
\]
defined on some domain $\mathcal{D}$,
\[
\tilde{\phi}(x,y)=(\tilde{\phi}_1(x,y), \tilde{\phi}_2(x,y))\in(\mathbb{F}_p(x,y))^2
\]
is defined as the map whose coefficients are all reduced.
The rational system $\phi$ is said to have a \textit{good reduction} (modulo $\mathfrak{p}$ on the domain $\mathcal{D}$) if we have $\widetilde{\phi(x,y)}=\tilde{\phi}(\tilde{x},\tilde{y})$ for any $(x,y) \in \mathcal{D}$ \cite{Silverman}.
We have defined a generalized notion in our previous letter and have explained its usefulness;
\begin{definition}[\cite{KMTT}]
A (non-autonomous) rational system $\phi_n$: $\Q_p^2 \to \Q_p^2$ $(n \in \Z)$ is said to have an almost good reduction modulo $\mathfrak{p}$ on the domain $\mathcal{D} \subseteq \Z_p^2$, if there
exists a positive integer $m_{\mbox{\rm \scriptsize p};n}$ for any $\mbox{\rm p}=(x,y) \in \mathcal{D}$ and time step $n$ such that
\begin{equation}
\widetilde{\phi_n^{m_{\mbox{\rm \tiny p};n}}(x,y)}=\widetilde{\phi_n^{m_{\mbox{\rm \tiny p};n}}}(\tilde{x},\tilde{y}),
\label{AGR}
\end{equation}
where $\phi_n^m :=\phi_{n+m-1} \circ \phi_{n+m-2} \circ \cdots \circ \phi_n$.
\end{definition} 
If we can take $m_{\mbox{\rm \scriptsize p};n}=1$, then the mapping has a good reduction.
Let us first review some of the previous findings in order to see the significance of the notion of \textit{almost good reduction}. Let us consider the mapping $\Psi_\gamma$:
\begin{equation}
\left\{
\begin{array}{cl}
x_{n+1}&=\dfrac{x_n+1}{x_n^\gamma y_n}\\
y_{n+1}&=x_n
\end{array}
\right.,
\label{discretemap}
\end{equation} 
where $\gamma \in \Z_{\ge 0}$ is a non-negative integer parameter. 
The map \eqref{discretemap} is known to be integrable if and only if $\gamma=0,1,2$.
When $\gamma=0,1,2$, the map \eqref{discretemap} belongs to the QRT family and is integrable in the sense that it has a conserved quantity. We also note that the system \eqref{discretemap} can be seen as an autonomous version of the $q$-discrete Painlev\'{e} I equation for $\gamma=0,1,2$. We have proved that, in this example, the integrability is equivalent to having almost good reduction.
\begin{proposition}[\cite{KMTT}]
The rational mapping \eqref{discretemap} has an almost good reduction modulo $\mathfrak{p}$ on the domain $\mathcal{D}=\{(x,y) \in \Z_p^2 \ |x \ne 0, y \ne 0\}$, if and only if $\gamma=0,1,2$.
\label{PropQRT}
\end{proposition}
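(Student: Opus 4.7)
The plan is to handle the two directions by tracking $p$-adic valuations along orbits. Set $a_n := v_p(x_n)$, $b_n := v_p(y_n)$; the recursion gives $b_{n+1} = a_n$ and
\[
a_{n+1} \;=\; v_p(x_n+1) \;-\; \gamma\, a_n \;-\; b_n,
\]
where $v_p(x_n+1) = \min(a_n, 0)$ provided $\tilde{x}_n \neq -1$. For $(x_0, y_0) \in \mathcal{D}$ with $a_0 = b_0 = 0$ (and $\tilde{x}_0 \neq -1$), \eqref{AGR} holds with $m = 1$; the nontrivial case is the singular locus where $a_0 > 0$, $b_0 > 0$, or $\tilde{x}_0 = -1$.

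For the \emph{if} direction, fix $\gamma \in \{0,1,2\}$. In the $\gamma = 0$ case a direct computation shows $\Psi_0^{5} = \mathrm{id}$ as a rational map, so \eqref{AGR} is trivial with $m = 5$. For $\gamma \in \{1,2\}$, I would invoke the singularity confinement of these integrable (QRT-type) cases at the level of the valuation recurrence: starting from any singular valuation pair, the orbit of $(a_n, b_n)$ returns to $(0,0)$ after a bounded number of iterations. One checks this by enumerating the finitely many confinement patterns; for $\gamma = 2$ and $(a_0, b_0) = (1, 0)$, for instance, the pattern is $(1,0) \to (-2,1) \to (1,-2) \to (0,1) \to \cdots$, exiting the singular region within a few steps. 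Tracking the leading $p$-adic coefficients in parallel then confirms $\widetilde{\Psi_\gamma^m(x_0,y_0)} = \widetilde{\Psi_\gamma^m}(\tilde{x}_0,\tilde{y}_0)$ for $m$ equal to the return time.

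For the \emph{only if} direction, suppose $\gamma \geq 3$ and take $(x_0, y_0) = (p, 1)$, so $(a_0, b_0) = (1, 0)$. Iterating yields $a_1 = -\gamma$, $a_2 = \gamma^2 - \gamma - 1$, $a_3 = -\gamma^3 + \gamma^2 + 2\gamma$, and a straightforward induction shows $|a_n| \to \infty$: the signs of $a_n$ alternate, so the piecewise-linear recursion collapses to a fixed linear one with characteristic roots of modulus exceeding $1$ when $\gamma \geq 3$. Consequently $\widetilde{\Psi_\gamma^m(x_0,y_0)} \in \{0,\infty\}$ for every $m \geq 1$, while the reduced iterate $\widetilde{\Psi_\gamma^m}(\tilde{x}_0, \tilde{y}_0)$ never escapes the indeterminate locus in $\P\F_p \times \P\F_p$. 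A direct comparison rules out \eqref{AGR} for any $m$, so almost good reduction fails at $(p,1)$.

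The main obstacle is the ``only if'' direction: proving $|a_n| \to \infty$ requires a sign analysis to linearise the piecewise-linear valuation recursion, and then one must argue that divergence genuinely forbids \eqref{AGR} rather than merely making the naive reduction indeterminate---equivalently, any coincidence of the two sides of \eqref{AGR} would force the lifted orbit to revisit a regular valuation pair, contradicting $|a_n| \to \infty$. The ``if'' direction, beyond the trivial case $\gamma = 0$, amounts to a routine finite check of the confinement patterns for $\gamma = 1, 2$, mirroring the Grammaticos--Ramani--Papageorgiou singularity-confinement test for integrability.
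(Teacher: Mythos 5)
Your valuation-tracking framework does not actually capture what \eqref{AGR} asserts, and this creates gaps in both directions. (Note also that the paper states this proposition without proof, citing \cite{KMTT}; the analogous argument it does give, for Proposition \ref{PropqP2}, is a direct symbolic computation of the composed maps at each singular configuration, with sub-cases for secondary singularities --- exactly the computation your sketch defers.) For the \emph{if} direction, your stated criterion --- that $(a_n,b_n)$ returns to $(0,0)$ --- is false for $\gamma=2$: from $(a_0,b_0)=(1,0)$ the valuation orbit is periodic, $(1,0)\to(-2,1)\to(1,-2)\to(0,1)\to(-1,0)\to(1,-1)\to(-1,1)\to(0,-1)\to(1,0)$, and never reaches $(0,0)$. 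Nevertheless \eqref{AGR} holds there with $m=3$: writing $x_0=pu$, $y_0=w$ one finds $x_3\equiv w^{-1}$ and $y_3\equiv 0$ modulo $\mathfrak{p}$, which agrees with the determinate value of the reduced composed map at $(0,\tilde{w})$. The point is that \eqref{AGR} permits the two sides to agree at $0$ or $\infty$; what must be verified is that the reduction of the $m$-th iterate depends only on $(\tilde{x}_0,\tilde{y}_0)$ and coincides with the value of $\widetilde{\Psi_\gamma^m}$ there. That is precisely the ``tracking the leading coefficients'' step you relegate to an afterthought, and it is the entire content of the proof; the valuation pattern alone is neither necessary nor sufficient for it.

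The \emph{only if} direction has a matching gap. Writing a component of $\Psi_\gamma^m$ as $P/Q$ with $P,Q$ primitive over $\Z_p$, the reduction homomorphism gives $\widetilde{(P/Q)(x_0,y_0)}=\tilde{P}(\tilde{x}_0,\tilde{y}_0)/\tilde{Q}(\tilde{x}_0,\tilde{y}_0)$ automatically whenever at least one of $\tilde{P},\tilde{Q}$ is nonzero at the reduced point --- including the cases where both sides equal $0$ or $\infty$. Hence $|a_m|\to\infty$ does not forbid \eqref{AGR}: if for some $m$ only the denominator (say) of $\widetilde{\Psi_\gamma^m}$ vanishes at $(0,1)$, both sides are $\infty$ and the equality holds. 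Your proposed repair --- that a coincidence would force the lifted orbit back to a regular valuation pair --- is therefore incorrect. What you must actually show is that the $0/0$ indeterminacy of $\widetilde{\Psi_\gamma^m}$ at the chosen point persists for every $m$ when $\gamma\ge 3$; a workable route is to exhibit two lifts of $(0,1)$ (e.g.\ $(p,1)$ and $(p^2,1)$) whose $m$-th iterates have different reductions for each $m$, since determinacy of the reduced composed map would force all lifts to reduce to the same value. Your observation that $\Psi_0^5=\mathrm{id}$ for the Lyness case $\gamma=0$ is correct and is a genuinely slicker disposal of that case than a step-by-step check.
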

In the previous work, the dP\II equation, too,  has been proved to have almost good reduction.
It is reasonable to postulate that almost good reduction is closely related to the integrability of maps of the plane.
We further clarify this point by demonstrating that $q$-discrete analogue of the Painlev\'{e} II equation also has almost good reduction. 
The $q$-discrete analogue of Painlev\'{e} II equation ($q$P\II equation) is a following $q$-difference equation:
\begin{equation}
(z(q\tau)z(\tau)+1)(z(\tau)z(q^{-1}\tau)+1)=\frac{a \tau^2 z(\tau)}{\tau-z(\tau)},
\label{qP2eq}
\end{equation}
where $a$ and $q$ are parameters \cite{Kajiwaraetal}.
It is also convenient to rewrite \eqref{qP2eq} as a system form
\begin{equation}
\Phi_n: \left\{
\begin{array}{cl}
x_{n+1}&=\dfrac{a(q^n\tau_0)^2x_n-(q^n\tau_0-x_n)(1+x_ny_n)}{x_n(q^n\tau_0-x_n)(x_ny_n+1)},\\
y_{n+1}&=x_n,
\end{array}
\right.
\label{qP2}
\end{equation}
where $\tau=q^n\tau_0$.
We can prove the following proposition:
\begin{proposition}
Suppose that $a, q, \tau_0$ are integers not divisible by $p$, then the mapping \eqref{qP2} has an almost good reduction   
modulo $\mathfrak{p}$ on the domain 
$\mathcal{D}:=\{(x,y)\in \Z_p^2\ |x \ne 0, x \ne q^n\tau_0\ (n\in\Z), xy+1 \ne 0\}$.
\label{PropqP2}
\end{proposition}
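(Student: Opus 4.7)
The plan is to verify the AGR identity \eqref{AGR} by classifying initial data $(x_n, y_n) \in \mathcal{D}$ according to where the reduction sits relative to the singular locus of $\tilde{\Phi}_n$. The denominator of $\Phi_n$ factors as $x_n(q^n\tau_0 - x_n)(x_n y_n + 1)$, and since $(x_n, y_n) \in \mathcal{D}$ guarantees that each of these three factors is a nonzero element of $\Z_p$, exactly one of four mutually exclusive regimes applies: either all three factors have $v_p = 0$, or else exactly one of them lies in $\mathfrak{p}$. In the first (generic) case the denominator is a $p$-adic unit and, since the reduction map $\Z_p \to \F_p$ is a ring homomorphism, the AGR identity holds trivially with $m = 1$.

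For the three singular cases I propose a singularity confinement analysis over $\Q_p$, following the template used in \cite{KMTT} for dP\II. The key algebraic input is the rearranged identity
\[
x_{n+1} x_n + 1 = \frac{a(q^n\tau_0)^2 x_n}{(q^n\tau_0 - x_n)(x_n y_n + 1)},
\]
which, applied inductively, lets one track the three critical valuations along the forward $p$-adic orbit. For instance, starting with $k := v_p(x_n) \ge 1$, a direct computation should give $v_p(x_{n+1}) = -k$ and $v_p(x_{n+1}y_{n+1}+1) = k$; then $v_p(x_{n+2}) = 0$ with $v_p(q^{n+2}\tau_0 - x_{n+2}) = k$ and $v_p(x_{n+2}y_{n+2}+1) = -k$; and finally all three critical valuations return to $0$ at step $n+3$. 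The composition $\Phi_n^3$, when rewritten in lowest terms over $\Z_p$, thus no longer has the singular factor obstructing the reduction, and comparing the leading $p$-adic expansion of $\Phi_n^3(x_n, y_n)$ with $\widetilde{\Phi_n^3}(\tilde{x}_n, \tilde{y}_n)$ yields AGR with $m = 3$. The remaining two singular regimes are related to this one by the symmetry $x \leftrightarrow q^n\tau_0 - x$ in \eqref{qP2} and by backward iteration, and are handled analogously.

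The main obstacle is verifying the valuation pattern: one must check that no subleading $p$-adic contributions vanish accidentally and that the cancelled representation of $\Phi_n^m$ indeed reduces to a rational function well-defined at $(\tilde{x}_n, \tilde{y}_n)$. This is the $p$-adic analogue of the classical singularity confinement test for $q$P\II, whose success reflects the existence of a Sakai-type space of initial conditions for the equation; the latter guarantees that all singular patterns are confined in a small, fixed number of steps, so that a uniform $m$ can be chosen on each singular regime to establish AGR.
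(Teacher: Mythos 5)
Your overall strategy --- treat the generic case as good reduction with $m=1$ and then track $p$-adic valuations along the orbit through each singular configuration --- is the right one, and your identity $x_{n+1}x_n+1=a(q^n\tau_0)^2x_n/\bigl((q^n\tau_0-x_n)(x_ny_n+1)\bigr)$ is correct and useful. However, two of your structural claims fail, and they are precisely where the content of the proposition lies. First, the three singular regimes are not mutually exclusive: while $x_n\in\mathfrak{p}$ is incompatible with either of the other two conditions, the conditions $q^n\tau_0-x_n\in\mathfrak{p}$ and $x_ny_n+1\in\mathfrak{p}$ can hold simultaneously (namely when $\tilde{x}_n=\widetilde{q^n\tau_0}$ and $\tilde{y}_n=-\tilde{x}_n^{-1}$), so your four-way classification misses a degenerate stratum. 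Second, and more seriously, $m=3$ does not suffice on each singular regime. The paper's computation shows that for $\tilde{x}_n=0$ the orbit after three steps arrives with second coordinate $\widetilde{q^{n+2}\tau_0}$, i.e.\ adjacent to another singular value; whether confinement has actually completed depends on whether $-1+q^2-aq^2\tau^2+q^3\tau^2-q^2\tau\tilde{y}_n$ vanishes mod $p$, and in the degenerate sub-case one needs $m=5$. Likewise $\tilde{x}_n=\widetilde{q^n\tau_0}$ splits into a sub-case with $m=3$ and one with $m=7$, and the regime $\tilde{x}_n\tilde{y}_n+1=0$ requires $m=7$ outright. Your valuation bookkeeping, which records only the leading orders of the three critical factors, cannot detect these sub-cases: they come from cancellations in quantities (such as $-1+q^2-aq^2\tau^2+\cdots$) that are not among your tracked factors, so ``no subleading contributions vanish accidentally'' is exactly the hypothesis that fails on a nonempty subset of $\mathcal{D}$.

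The two auxiliary devices you invoke to avoid computing the remaining regimes also do not work. The substitution $x\leftrightarrow q^n\tau_0-x$ is not a symmetry of \eqref{qP2}: it exchanges the denominator factors $x_n$ and $q^n\tau_0-x_n$ but sends $x_ny_n+1$ to $(q^n\tau_0-x_n)y_n+1$, which is not a factor of the map, and it does not preserve the numerator. And backward iteration cannot establish the almost good reduction identity \eqref{AGR}, which by definition demands a \emph{positive} power $\Phi_n^{m}$. The paper's proof is an explicit forward computation of $\widetilde{\Phi_n^m}$ in five cases with $m\in\{3,5,7\}$; to repair your argument you would need to carry the valuation (or explicit symbolic) analysis through steps $4$ to $7$ in the degenerate strata rather than stopping at $3$.
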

\begin{proof}
Let $(x_{n+1},y_{n+1})=\Phi_n(x_n,y_n)$. 
We only examine the cases $\tilde{x}_n=0$ or $\widetilde{q^n\tau_0}$ 
or $-\tilde{y}_n^{-1}$. (It is due to the fact that the mapping trivially has good reduction for other points $(x_n,y_n)$.)
We use the abbreviation $\tilde{q}=q, \tilde{\tau}=\tau, \tilde{a}=a$ for simplicity. 
 By direct computation, we obtain;\\
(i) If $\tilde{x}_n=0$ and $ -1+q^2-aq^2\tau^2+q^3\tau^2-q^2\tau \tilde{y}_n \ne 0$, 
\[
\widetilde{\Phi_n^3(x_n,y_n)} = \widetilde{\Phi_n^3}(\tilde{x}_n=0,\tilde{y}_n)=\left( \frac{ 1-q^2+aq^2\tau^2-q^3\tau^2-aq^4\tau^2+q^2\tau \tilde{y}_n}{q^2\tau( -1+q^2-aq^2\tau^2+q^3\tau^2-q^2\tau \tilde{y}_n ) }   , q^2\tau  \right).
\]
(ii) If $\tilde{x}_n=0$ and $ -1+q^2-aq^2\tau^2+q^3\tau^2-q^2\tau \tilde{y}_n = 0$, 
\[
\widetilde{\Phi_n^5(x_n,y_n)} = \widetilde{\Phi_n^5}(\tilde{x}_n=0,\tilde{y}_n) =\left( \frac{1-q^2+q^7\tau^2-aq^8\tau^2}{q^4\tau}, 0    \right).
\]
(iii) If $\tilde{x}_n=\tau$ and $1+\tau \tilde{y}_n\ne 0$,
\begin{align*}
&\widetilde{\Phi_n^3(x_n,y_n)} = \widetilde{\Phi_n^3}(\tilde{x}_n=\tau,\tilde{y}_n)\\
&\quad =\left(\frac{ 1-q^2+(a+q-aq^2)q^2\tau^2+(1-q^2)\tau\tilde{y}+(1-aq)q^3\tau^3\tilde{y}  }{q^2\tau(1+\tau\tilde{y}_n)}, 0 \right).
\end{align*}
(iv) If $\tilde{x}_n=\tau$ and $1+\tau \tilde{y}_n= 0$,
\[
\widetilde{\Phi_n^7(x_n,y_n)} = \widetilde{\Phi_n^7}(\tilde{x}_n=\tau,\tilde{y}_n)=\left(\frac{1}{aq^{12}\tau^3}, - aq^{12}\tau^3  \right).
\]
(v) If $\tilde{x}_n \tilde{y}_n+1=0$,
\[
\widetilde{\Phi_n^7(x_n,y_n)} = \widetilde{\Phi_n^7}(\tilde{x}_n=-\tilde{y}_n^{-1}, \tilde{y}_n)=\left(-\frac{1}{aq^{12}\tau^4\tilde{y}_n}, aq^{12}\tau^4\tilde{y}_n  \right).
\]
Thus we complete the proof. \end{proof}
From this proposition we can explicitly define the time evolution of the $q$P\II equation.
We can also consider the special solutions for qP\II equation \eqref{qP2eq} over $\P\F_p$. 
In \cite{HKW} it has been proved that \eqref{qP2eq} over $\C$ with $a=q^{2N+1}$ $(N \in \Z)$ is solved by
the functions given by
\begin{align}
z^{(N)} (\tau) &= 
\begin{cases}
\DIS \frac{g^{(N)} (\tau) g^{(N+1)} (q \tau)}{q^N g^{(N)} (q \tau) g^{(N+1)} (\tau)}
 & (N \ge 0) \\
\DIS \frac{g^{(N)} (\tau) g^{(N+1)} (q \tau)}{q^{N+1} g^{(N)} (q \tau) g^{(N+1)} (\tau)} & (N<0)
\end{cases}, \label{eq:gtoz} \\
g^{(N)} (\tau) &= 
\begin{cases}
\begin{vmatrix}
w(q^{-i+2j-1}\tau)
\end{vmatrix}_{1\le i,j\le N} & (N>0) \\
1 & (N=0) \\
\begin{vmatrix}
w(q^{i-2j} \tau)
\end{vmatrix}_{1\le i,j\le -N} & (N<0)
\end{cases}, \label{eq:det_sol_g}
\end{align}
where $w(\tau)$ is a solution of the $q$-discrete Airy equation:
\begin{equation}
w(q\tau)-\tau w(\tau)+w(q^{-1}\tau)=0. 
\label{dAiryeq}
\end{equation}
As in the case of the dP\II equation, we can obtain the corresponding solutions 
to \eqref{eq:gtoz} over $\P\F_p$ by reduction modulo $\mathfrak{p}$ according to the proposition \ref{PropqP2}.
For that purpose, we have only to solve \eqref{dAiryeq} over $\Q_p$.
By elementary computation we obtain:
\begin{equation}
w(q^{n+1}\tau_0)=c_1P_{n}(\tau_0;q)+c_0P_{n-1}(q\tau_0;q),
\label{Airy:sol}
\end{equation}
where $c_0,\ c_1$ are arbitrary constants and $P_n(x;q)$ is defined by the tridiagonal determinant: 
\[
P_n(x;q):=
\left|
\begin{array}{ccccc}
qx&-1&&&\\
-1&q^2x&-1&&\hugesymbol{0}\\
&\ddots&\ddots&\ddots& \\
&&-1&q^{n-1}x&-1\\
\hugesymbol{0}&&&-1&q^{n}x
\end{array}
\right|.
\]
The function $P_n(x;q)$ is the polynomial of $n$th order in $x$,
\[
P_n(x;q)=\sum_{k=0}^{[n/2]}(-1)^k a_{n;k}(q)x^{n-2k},
\]
where $a_{n;k}(q)$ are polynomials in $q$.
If we let $i \ll j $ denotes $i<j-1$, and 
$
c(j_1,j_2,...,j_k):=\sum_{r=1}^k (2j_r+1),
$
then, we have
\[
a_{n;k}=\sum_{1\le j_1 \ll j_2 \ll \cdots \ll j_k \le n-1} q^{n(n+1)/2 -c(j_1,j_2,...,j_k)}.
\] 
Therefore the solution of $q$P\II equation over $\P\F_p$ is obtained by reduction modulo $\mathfrak{p}$ from \eqref{eq:gtoz}, \eqref{eq:det_sol_g}
and \eqref{Airy:sol} over $\Q$ or $\Q_p$.

\section{Concluding remarks}
In this article we investigated the two types of the discrete Painlev\'{e} II equations and their reduction modulo a prime.
To avoid indeterminacy, we examined two approaches. 
One is to extend the domain by blowing up at indeterminate points. 
According to the theory of the space of initial conditions, this approach is possible for all the discrete Painlev\'{e} equations.
An interesting point is that the space of initial conditions over a finite field can be reduced to a smaller domain resulting from the discrete topology of the finite field.
The other is the reduction modulo a prime number from a local field, in particular, the field of $p$-adic numbers $\Q_p$.
We defined the notion of \textit{almost good reduction} which is an arithmetic analogue of passing the singularity confinement test, and proved that the $q$-discrete Painlev\'{e} II equation has this property.
Thanks to this property, not only the time evolution of the discrete Painlev\'{e} equations can be well defined, but also
a solution over $\Q$ (or $\Q_p$) can be directly transferred to a solution over $\P\F_p=\F_p\cup\{\infty\}$.
We presented the special solutions over $\P\F_p$.
We conjecture that this approach is equally valid in other discrete Painlev\'{e} equations.
We have recently proved almost good reduction property for several
$q$-discrete Painlev\'{e} equations \cite{Kankisigma}.
However, we have little results for other interesting equations,
such as chaotic maps, linearizable maps, and generalised Painlev\'{e} systems \cite{KNY}.
Furthermore, we expect that this `almost good reduction' criterion can be applied to finding higher order \textit{integrable} mappings in arithmetic dynamics, and
that a similar approach is also useful for the investigation of discrete partial difference equations such as soliton equations over finite fields \cite{DBK,KMT}. These problems are currently being investigated.     

\section*{Acknowledgments}
The authors wish to thank Professors K. M. Tamizhmani and  R. Willox for helpful discussions.
This work was partially supported by Grant-in-Aid for JSPS Fellows (24-1379).

\end{document}